\documentclass[]{amsart}
\usepackage{qzart}
\usepackage{algorithmic}
\usepackage{framed}

\newenvironment{nscenter}
 {\parskip=0pt\par\nopagebreak\centering}
 {\par\noindent\ignorespacesafterend}
\numberwithin{equation}{section}

\begin{document}

\title[HANOVA]{Estimation and Prediction in Sparse and Unbalanced
  Tables}

\author{Qingyuan Zhao, Trevor Hastie and Daryl Pregibon}
\address{University of Pennsylvania, Stanford University and Google Inc.}

\email{qyzhao@wharton.upenn.edu}

\keywords{factorial data, hierarchical modeling, ANOVA}

\thanks{Qingyuan Zhao was a Ph.D.\ candidate at Stanford University
  when this work completed. Trevor Hastie was partially supported by
  grant DMS-1407548 from the National Science Foundation, grant 5R01
  EB 001988-21 from the National Institutes of Health, and a Google
  research grant.}

\begin{abstract}
We consider the problem where we have a multi-way table of means,
indexed by several factors, where each factor can have a large number
of levels. The entry in each cell is the mean of some response,
averaged over the observations falling into that cell. Some cells may
be very sparsely populated, and in extreme cases, not populated at
all. We might still like to estimate an expected response in such
cells. We propose here a novel hierarchical ANOVA (HANOVA) representation for
such data. Sparse cells will lean more on the lower-order interaction
model for the data. These in turn could have components that are
poorly represented in the data, in which case they rely on yet
lower-order models. Our approach leads to a simple hierarchical
algorithm, requiring repeated calculations of sub-table means of
modified counts. The algorithm has shown superiority over the unshrinked methods in both simulations and real data sets.
\end{abstract}

\maketitle

\section{Introduction}
\label{sec:introduction}

Prediction with factorial features has been studied for a long time
in statistics. Recently there are many arising
applications of this kind, but with much larger data size than
before. For instance, consider data on restaurant ratings, as provided
by Zagat, Yelp or other such services. In addition, the restaurants
can usually be classified according to various factors, such as
zip-code or geographical region (at least 10000 levels), type of cuisine
(Italian, French, etc, potentially dozens of levels), and price
category (often 1-5 stars). One important task is to estimate the
average rating for a particular kind of restaurants defined by these
factors, which can be used to answer questions like ``What is the most
popular cuisine in San Francisco?'' or ``Do expensive restaurants
usually get better ratings?''.

Such data sets and related questions have been studied thoroughly by
statisticians from the first formal proposal of analysis of
variance (ANOVA) in \citet{Fisher1918}. In most of the analysis of
variance literature, the focus is how to estimate the relative importance of
different variance components and how to test for their statistical
significance. The problem studied in this paper differs from classical
ANOVA in the following aspects:
\begin{itemize}
\item The focus is estimating and predicting cell means,
  whereas the main objective of ANOVA is to find important factors
  or interactions and test for their significance.
\item The observed data table is sparse and highly
  unbalanced. In many applications, we may only have a small proportion
  of the cells observed with quite different weights. For example, one
  zip code may have no Ethiopian
  restaurant, only one unpopular Japanese restaurant with a few
  dozen ratings, but many American restaurants with thousands of
  ratings in total.
\item The size of the data is usually very large and we also have
  a large number of main effects and interactions to estimate, so it
  is very important to find an computationally efficient algorithm
  while maintaining good statistical properties.
\end{itemize}

In the rest of the paper, we first illustrate our method with the
restaurant ratings example in Section \ref{sec:estimation-three-way},
then describe the statistical problem we are trying to solve and some
previous literature in Section \ref{sec:extens-gener-tabl}. In Section
\ref{sec:balanced} we prove some theoretical properties of our
algorithm when the observed data is balanced. This gives us an
empirical choice of penalty parameter, as described in Section
\ref{sec:unbalanced-table}. The rest of the paper is devoted to
algorithm implementation (Section \ref{sec:implementation}) and
simulations and some real data results (Section \ref{sec:results}).

\section{Estimation in Three-way Tables}
\label{sec:estimation-three-way}

Our proposed method is general can be applied to any number of
factors, but this would require a somewhat technical
representation. In this section we will restrict ourselves to three
factors ($F_1$, $F_2$ and $F_3$) to demonstrate the key elements in the algorithm.

Suppose these factors have $I$, $J$, and $K$ levels respectively. The
restaurant ratings are a collection of $n_{ijk}$  measurements  $y_{ijk\ell},\;\ell=1,\ldots,n_{ijk}$ at each cell $(i,j,k)$, and we are interested in predicting the cell mean $\mu_{i,j,k}=E(Y|F_1=i,F_2=j,F_3=k)$. We summarize our data by the cell means $\bar y_{ijk}=y_{ijk\cdot}/n_{ijk}$ and a weight $n_{ijk}$. For simplicity, we drop the bar, and refer to our aggregate data as $y_{ijk}$.

For some cells $n_{ijk}$ could be very small, even zero, but we would
still like to have a reasonable estimate.  In this case we would like
to shrink our estimate towards a more stable number that borrows
strength from similar cells.
This suggests a Bayesian mixed-effects framework \cite{Diggle1994},
where we assume the  $\mu_{ijk}$ are random, say Gaussian, with
distribution $\mu_{ijk}\sim N(\gamma_{ijk},\sigma_\mu^2)$. For the
moment assume $\gamma_{ijk}$ is known. If we assume the original
measurements $y_{ijk\ell}|\mu_{ijk}\sim N(\mu_{ijk},\sigma^2)$ and are
all independent, then the negative log-posterior likelihood for
$\mu_{ijk}$ given data is proportional to
\begin{equation}
  \label{eq:logpost}
  L(\mu)=\sum_{i,j,k}n_{ijk}(y_{ijk}-\mu_{ijk})^2 +\lambda \sum_{i,j,k}(\mu_{ijk}-\gamma_{ijk})^2,
\end{equation}
where $\lambda=\sigma^2/\sigma^2_\mu$. The posterior mode is simple to characterize:
\begin{equation}
  \label{eq:muhat}
 E(\mu_{ijk}|\mathbf{y})= \frac{n_{ijk}y_{ijk}}{n_{ijk}+\lambda} + \frac{\lambda \gamma_{ijk}}{n_{ijk}+\lambda},
\end{equation}
a simple weighted average of the observed mean $y_{ijk}$ and $\gamma_{ijk}$, with more emphasis on the former when $n_{ijk}$ is large.
Now $\gamma_{ijk}$ is not known, and so we can represent it by a parametric model $\gamma_{ijk}(\theta)$.
Empirical Bayes amounts to estimating $\theta$ by maximizing the marginal likelihood (integrating out $\mu$), and then using $\gamma_{ijk}(\hat\theta)$ as the target of shrinkage in (\ref{eq:muhat}).

\citet{Barry1990}, in a balanced two-way layout, proposed a simple main-effects parametrization $\gamma_{ij}=\alpha^0+\alpha^1_i+\alpha^2_j, $ with $\sum_i\alpha^1_i=0$, and $\sum_j\alpha^2_j=0$.
One can show in this case that the empirical Bayes procedure amounts to maximizing the following likelihood
\begin{equation}
  \label{eq:barry2}
L(\mu)=\sum_{i,j}(y_{ij}-\mu_{ij})^2+\lambda\sum_{i,j}(\mu_{ij}-\mu_{i\cdot}-\mu_{\cdot j}+\mu_{\cdot\cdot})^2,
\end{equation}
where $\mu_{i\cdot}=\sum_{i=1}^I\mu_{ij}/I$ etc.
Now one can show two things:
\begin{itemize}
\item We can write (\ref{eq:barry2}) in vector notation as
  \begin{equation}
    \label{eq:barry3}
    L(\bm{\mu})=(\mathbf{y}-\bm{\mu})^T(\mathbf{y}-\bm{\mu})+\lambda \bm{\mu}^T(\mathbf{I}-\mathbf{P}_A)\bm{\mu},
  \end{equation}
where $\mathbf{P}_A$ is the main-effects ANOVA projection operator. This has solution
\begin{equation}
  \label{eq:barry2sol}
  \hat{\bm{\mu}}=(\mathbf{I}+\lambda(\mathbf{I}-\mathbf{P}_A))^{-1}\mathbf{y}.
\end{equation}
\item This solution can be shown to have the much simpler form
\begin{equation}
  \label{eq:barry2sol}
  \hat{\bm{\mu}}=\frac{\mathbf{y}}{1+\lambda}+\frac{\lambda \tilde {\mathbf{y}}}{1+\lambda},
\end{equation}
where $\tilde{\mathbf{y}}=\mathbf{P}_A\mathbf{y}$, which is the main-effects ANOVA fit. This is of the same form as (\ref{eq:muhat}), with $n_{ij}=1$ and $\gamma_{ij}=\tilde y_{ij}$
\end{itemize}
This nice simplification disappears if we include weights in (\ref{eq:barry2}) or (\ref{eq:barry3}), although we do get a closed form expression for the solution along the lines of (\ref{eq:barry2sol}) (but involving weights).
What we like about (\ref{eq:barry2sol}), apart from its simplicity, is that it is easy to compute. The ANOVA fit $\tilde y_{ij}$ requires simple marginal means of $y_{ij}$ along the two factors, and the overall mean.
There are several problems though. This simplicity, both in
representation and especially in computation, goes away when each observation has weights. In addition, it may be that the main-effects ANOVA model is not well estimated for some values of $i$ or $j$, because of sparsity in these margins.

We propose a method that has the simplicity of (\ref{eq:muhat}) for a weighted model, that has a multi-level hierarchical structure, and is easy to compute.

\medskip

\begin{framed}
\begin{nscenter}
  {\bf \large Hierarchical penalized ANOVA model \\
 \center for a three-way table}
\end{nscenter}

\begin{enumerate}
\item[0.] Fit $\mu^{(0)}_{ijk}=\bar y_{...}$, the overall
  weighted mean of all the $y$s.
\item[1.] Fit the additive model $\mu^{(1)}_{ijk}=
  \alpha_i^1+\alpha_j^2+\alpha^3_k$ by solving the
  following weighted penalized least squares (WPLS) problem:
\begin{equation*}
\label{eq:addstep}
\min_{\alpha} \sum_{i,j,k}n_{ijk}(y_{ijk}-(\alpha_i^1+\alpha_j^2+\alpha^3_k))^2
+\lambda_{1} \cdot \sum_{n_{ijk} > 0}  (\alpha_i^1+\alpha_j^2+\alpha^3_k-\mu^{(0)}_{ijk})^2.
\end{equation*}
    This can be solved using a simple backfitting algorithm
    \cite{Hastie1990}, where each step is a version of
    (\ref{eq:muhat}) that only requires table summing. Here we shrink
    to the overall mean, so if a particular one-way marginal count is
    low, the shrinkage will still kick in.
  \item[2.] Fit the second-order interaction  model $\mu^{(2)}_{ijk}=
    \psi_{ij}^{12}+\psi_{jk}^{23}+\psi^{13}_{ik}$ by
    solving the WPLS problem:
    \begin{equation*}
\label{eq:intstep}
\begin{aligned}
\min_{\psi} &\sum_{i,j,k}n_{ijk}(y_{ijk}-(
\psi_{ij}^{12}+\psi_{jk}^{23}+\psi^{13}_{ik}))^2 \\
&+\lambda_{2} \cdot \sum_{n_{ijk} > 0} (
\psi_{ij}^{12}+\psi_{jk}^{23}+\psi^{13}_{ik}-\mu^{(1)}_{ijk})^2.
\end{aligned}
    \end{equation*}
    This is the case we were concerned
    about in Section \ref{sec:introduction}, when some two-way tables
    have sparse counts. In this case, that term is shrunk more towards its parent.
  \item [3.] Fit the full third-order model $\mu^{(3)}_{ijk}$ by solving the WPLS problem
     \begin{equation*}
\label{eq:intstep}
\min_{\mu} \sum_{i,j,k}n_{ijk}(y_{ijk}-\mu_{ijk})^2
+\lambda_{3}\sum_{n_{ijk}>0} ( \mu_{ijk}-\mu^{(2)}_{ijk})^2.
    \end{equation*}
    This is where we started, i.e. model~(\ref{eq:muhat}), with $\gamma_{ijk}=\tilde y^{(2)}_{ijk}$. The term we shrink to has been regularized to accommodate lower-level sparsity.
\end{enumerate}
\end{framed}

We have omitted the details of backfitting in steps 1 and 2. Consider
using coordinate-descent to solve (\ref{eq:addstep}), where for
example we hold $\alpha^2_j$ and $\alpha^3_k$ fixed at their current
estimates and solve for $\alpha^1_i$. By collecting terms, this is again of the form (\ref{eq:muhat}), and so the update involves simple table averaging of a modified response and shrinkage target.

There are many interesting details that haven't been discussed
yet. For example, we have the same $\lambda$ at each level; these
could be different. We can take the empirical Bayes analogy further
and use variance components to suggest values for $\lambda$, or at
least {\em relative values}. We will discuss this in Section
\ref{sec:theory}. Another option is to terminate the hierarchy
early. In the restaurant ratings example, three way table is
manageable even with many levels for each factor. However, when the
number of factors grows, the number of cells grows at a exponential
rate. In this case we might truncate the hierarchy early,  e.g. at
step 2, and use $\tilde y^{(2)}_{ijk}=\tilde \psi_{ij}^{12}+\tilde\psi_{jk}^{23}+\tilde\psi^{13}_{ik}$ as the estimate. Note in this case we would only need to store and compute at the level of two-way tables.

\section{Extension to General Tables}
\label{sec:extens-gener-tabl}

Let's extend our algorithm to a general table with $m \ge 1$
factors. All the cell means are represented by a vector $\mathbf
y$. Each cell mean is indexed by its factor levels
$I=(i_1,i_2,\cdots,i_m)$. For example, if $m=3$, then the cell that is
in the 2nd, 4th and 1st level of the three factors corresponds to
$y_I = y_{241}$. Every cell is also associated with a weight, denoted by $n_{I}$.

Let $\Omega$ be the set of observed indices. The general statistical
question we are trying to address is: suppose the ratings follow an
underlying statistical model $\mathbf y = \boldsymbol{\mu} +
\boldsymbol{\epsilon}$ and $\mu_I = f(I; \boldsymbol{\beta})$ for some
deterministic or random function $f$, if we have only observed
$\mathbf{y}_\Omega$, how do we estimate the underlying cell means
$\boldsymbol{\mu}$?

\subsection{General Algorithm}
\label{sec:general-algorithm}
In this paper, our algorithm assumes a hierarchy of models $f_0, f_1,
\cdots f_m$ increasing in complexity. Each time the algorithm uses
fits from the previous model $f_k$ as the prior for a new and
more complex model $f_{k+1}$. This is a very general hierarchical
model and we may choose different sequence of model functions in
different applications.

In our problem, the model function $f_k,~0 \le k \le m$ is a
$k$-th order interaction model, i.e.
\begin{equation}
  \label{eq:modelfunction}
  \mu^{(k)}_I = f_k(I;\boldsymbol{\beta}^{(k)}) = \sum_{J \subset [m],~|J|=k} \beta^J_{I_J}
\end{equation}
Here $J$ is a subset of $[m] = \{1,2,\ldots,m\}$, indicating the
factors we are considering; $\beta^J_{I_J}$ is the "mean'' effect of
cells that are in levels $I_J$ of the factors specified in $J$, and
$\boldsymbol{\beta}^{(k)} = \{ \boldsymbol{\beta}^J:~J \subset
[m],|J|=k\}$ is the model parameter for $f_k$.

Our algorithm is summarized in Algorithm \ref{alg:hbmodel}. Notice
that this idea can be generalized to solve any hierarchical model, not
necessarily the full additive model in \eqref{eq:modelfunction}.

\begin{algorithm}[H]
\caption{Generic algorithm to solve \eqref{eq:modelfunction}}
\label{alg:hbmodel}
\begin{algorithmic}
\FOR {$k = 1,\ldots,m+1$}
\STATE Estimate $\boldsymbol{\beta}^{(k)}$ by
$\hat{\boldsymbol{\beta}}^{(k)}$, the Bayes estimate given
$\mathbf{y}$ (data) and $\boldsymbol{\beta}^{(k-1)} =
\hat{\boldsymbol{\beta}}^{(k-1)}$ (prior).
\ENDFOR
\end{algorithmic}
\end{algorithm}

\subsection{Previous Methods}
\label{sec:previous-methods}

Many statisticians studied this kind of problem in the 1980's using
Bayes models. One typical model \citep{DuMouchelHarris1983} is
\begin{equation}
\begin{aligned}
y_{ij} &= \theta_{ij} + \epsilon_{ij},\\
\theta_{ij} &= \mu + \alpha_i + \beta_{j} + \delta_{ij},
\end{aligned}
\end{equation}
and they also want to extrapolate for some missing cells in the table. The empirical Bayes method is to put a normal prior on $\alpha$ and $\beta$
\begin{equation}
\alpha,\beta \sim \mathrm{N}(0,V)
\end{equation}
and estimate the hyperprior variance parameter $V$ using the data. A more general Bayesian ANOVA model and many more examples are
considered in the tutorial paper by \citet{Casella1992}. See also \citet[ch. 9]{Searle1992} and \citet{Searle2006}.

More recently, \citet{Beran2005} considered ANOVA from a pure
shrinkage and optimization view. The author considers the whole class of
estimators that solve penalized least squares, which may correspond to
some Bayes modeling or not. Then among all the possible
shrinkage estimators, the paper tries to find the one that minimize estimated
risk. This is a very sound approach from a theoretical point of view, but
the algorithm proposed in the paper scales poorly and couldn't be used
in real data sets.

In \cite{Gelman2005}, the author uses a hierarchical Bayes model to analyze Analysis of Variance in an universal way. The statistical model is
\begin{equation}
y_i = \sum_{m=0}^M \beta^{(m)}_{j_i^m},
\end{equation}
where $m$ stands for a batch of regression coefficients (or in terms
of ANOVA, variance component). Then he assumes normal priors on these coefficients
\begin{equation}
\beta_j^{(m)} \sim \mathrm{N}(0,\sigma_m^2), \quad \forall j=1,\ldots,J_m,~m=1,\ldots,M
\end{equation}
and puts hierarchical prior on $\sigma_m^2$. This bears some
similarity to our statistical model described in
\ref{sec:general-algorithm}. Gelman's paper focuses mostly on the
relationship between this hierarchical Bayes model with classical
ANOVA and how to use Gibbs sampler to sample from the posterior
distribution of $\sigma_m^2$.

Following this idea, a more recent paper by \citet{Volfovsky2012} considers a three-way factorial model
\begin{equation}
y_{ijkl} = \mu + a_i + b_j + c_k + (ab)_{ij} + (ac)_{ik} + (bc)_{jk} + (abc)_{ijk} + \epsilon_{ijkl}
\end{equation}
where $\epsilon_{ijkl}$ are i.i.d. $\mathrm{N}(0,\sigma^2)$. However,
the effects here are possibly correlated, which makes the problem different
from the usual settings. The authors put a normal prior with zero mean
and general covariance matrix on the effects and treat the covariance matrices as parameters with invert-Wishart priors. By plugging in empirical Bayes estimate for the hyperpriors, they can run a Gibbs sampler to find the posterior.

However, despite the flexibility of hierarchical Bayes model, it
usually needs a Markov Chain Monte Carlo (MCMC) algorithm to sample
from posterior distribution. This is unacceptable for more than a few
hundred cells. Also, as noticed in Section \ref{sec:introduction}, our
focus is on estimating and predicting cell means instead of
inference about variance components.

From the perspective of high dimensional estimation, shrinkage is known
to be very effective, for example the James-Stein
estimator dominates maximum likelihood estimator
\citep{Efron2010}. Recently, in a multi-task averaging problem,
\citet{Feldman2012} uses James-Stein estimator with empirically
estimated covariance matrix and claims to outperform James-Stein in
some occasions. However, the multi-task averaging problem has no
associated covariates, and the structural information of these factors
are crucial in our algorithm.

In light of the hierarchical Bayes model considered so far, our
algorithm can be viewed as an empirical solution to a hierarchical Bayes model.

\section{Theory}
\label{sec:theory}

In this section we develop some theoretical results for HANOVA. These
also provide a method to empirically select the regularization
parameter $\lambda$ in the algorithm.

\subsection{Balanced Table}
\label{sec:balanced}

Let's first assume all the observed cells have the same weight
(i.e. same number of observations).\footnote{This is different from
what a "balanced table' is meant in most of the statistics literature. In
our paper, we say a table is balanced if all the observed cells have
the same number of observations, but the table could contain many
empty cells.}
Assume we observe $n$ cells and the responses are centered. At each stage, we are maximizing the log "likelihood'' function
\begin{equation} \label{eq:loglike}
-\frac{1}{2} [\|\mathbf{y} - \boldsymbol{\mu}^{(k)}\|^2 + \lambda_k \|\boldsymbol{\mu}^{(k)} - \boldsymbol{\mu}^{(k-1)}\|^2 ]
\end{equation}
subject to $\boldsymbol{\mu}^{(k)} \in \mathcal{S}_k$. Here $\mathcal{S}_k$ is the linear subspace of $\mathbb{R}^n$ generated by all possible $k$-th order effects. This implies the subspaces are nested $\mathcal{S}_1 \subset \mathcal{S}_2 \subset \ldots \subset \mathcal{S}_k$. Say the dimension of $\mathcal{S}_k$ is $d_k$ and the projection matrix onto $\mathcal{S}_k$ is $\mathbf{P}_k$. This means $\mathbf{P}_k = \mathbf{U}_k \mathbf{U}_k^T$ where the columns of $\mathbf{U}_k \in \mathbb{R}^{n \times d_k}$ are orthonormal basis for $\mathcal{S}_k$, i.e. $\mathbf{U}_k^T \mathbf{U}_k = \mathbf{I}_{d_k}$. Because the subspaces are nested, we can assume the first $d_{k-1}$ columns of $\mathbf{U}_k$ are $\mathbf{U}_{k-1}$, i.e. $\mathbf{U}_k = \left( \begin{array} {cc} \mathbf{U}_{k-1} & \mathbf{V}_k \end{array} \right)$. Moreover suppose $\mathbf{U} = \left( \begin{array}{cc} \mathbf{U}_K & \mathbf{V} \end{array} \right) \in \mathbb{R}^{n \times n}$ is an orthogonal matrix.

Consider a hierarchical Bayes model that is a special case of \eqref{eq:modelfunction}.
\begin{equation} \label{eq:hiermodel}
\begin{aligned}
\mathbf{y} | \boldsymbol{\mu} &\sim \mathrm{N}(\boldsymbol{\mu},\sigma^2 \mathbf{I}_n),~\sigma^2 \mathrm{~is~known}\\
\boldsymbol{\mu} | \boldsymbol{\beta}^{(m)} &\sim
\mathrm{N}(\mathbf{U}_m \boldsymbol{\beta}^{(m)},\sigma_m^2 \mathbf{I}_n), \\
\boldsymbol{\beta}^{(m)} | \boldsymbol{\beta}^{(m-1)} &\sim \mathrm{N}(\mathbf{U}_{m}^T\boldsymbol{\mu}^{(m-1)},\sigma_{m-1}^2 \mathbf{I}_{d_m}),~ \boldsymbol{\mu}^{(m)} = \mathbf{U}_m \boldsymbol{\beta}^{(m)}\\
&~~\vdots  \\
\boldsymbol{\beta}^{(k)} | \boldsymbol{\beta}^{(k-1)} &\sim \mathrm{N}(\mathbf{U}_{k}^T\boldsymbol{\mu}^{(k-1)},\sigma_{k-1}^2 \mathbf{I}_{d_k}),~\boldsymbol{\mu}^{(k)} = \mathbf{U}_k \boldsymbol{\beta}^{(k)} \\
&~~\vdots  \\
\boldsymbol{\beta}^{(1)} | \boldsymbol{\beta}^{(0)} &\sim \mathrm{N}(\mathbf{U}_{1}^T\boldsymbol{\mu}^{(0)},\sigma_{0}^2 \mathbf{I}_{d_1}),~\boldsymbol{\mu}^{(1)} = \mathbf{U}_1 \boldsymbol{\beta}^{(1)}. \\
\end{aligned}
\end{equation}

As the next theorem indicates, this hierarchical model is closely
related to maximizing the log-likelihood \eqref{eq:loglike}. In the
usual ANOVA model, each factor may have totally different main effects
(or interactions) and we can test for the significance of
them. However, as pointed out in Section \ref{sec:introduction}, we
are no longer interested in any individual factor. By using orthogonal
matrices in the hierarchical model \eqref{eq:hiermodel}, we
implicitly treat the effects and interactions of the observable factors as
"randomly chosen" directions in the column space of $\mathbf{U}_m$.

\begin{theorem} \label{thm:empBayes1}
  For balanced table, each step of the hierarchical procedure
  (Algorithm \ref{alg:hbmodel}) for
  \eqref{eq:hiermodel} is equivalent to maximizing
  \eqref{eq:loglike} subject to $\boldsymbol{\mu}^{(k)} \in
  \mathcal{S}_k$, with
  \begin{equation} \label{eq:lambdak}
  \lambda_k = \frac{\sigma^2 + \sigma_m^2 + \ldots +
    \sigma_k^2}{\sigma_{k-1}^2}
  \end{equation}
  and $\boldsymbol{\beta}^{(k-1)}$ being the fit from last step.
\end{theorem}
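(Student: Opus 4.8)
The plan is, at a generic step $k$ of Algorithm~\ref{alg:hbmodel}, to marginalize \eqref{eq:hiermodel} over all the variables above level $k$ (namely $\boldsymbol{\mu}$ and $\boldsymbol{\beta}^{(m)},\ldots,\boldsymbol{\beta}^{(k+1)}$) to obtain the law of $\mathbf{y}$ given $\boldsymbol{\beta}^{(k)}$, then combine it with the prior of $\boldsymbol{\beta}^{(k)}$ given the already-computed $\boldsymbol{\beta}^{(k-1)}=\hat{\boldsymbol{\beta}}^{(k-1)}$, and show that the resulting Gaussian negative log-posterior, regarded as a function of $\boldsymbol{\mu}^{(k)}=\mathbf{U}_k\boldsymbol{\beta}^{(k)}\in\mathcal{S}_k$, equals a positive constant times $\tfrac12\bigl[\|\mathbf{y}-\boldsymbol{\mu}^{(k)}\|^2+\lambda_k\|\boldsymbol{\mu}^{(k)}-\boldsymbol{\mu}^{(k-1)}\|^2\bigr]$ plus a term free of $\boldsymbol{\mu}^{(k)}$, with $\lambda_k$ as in \eqref{eq:lambdak}. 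Because the posterior is Gaussian, its mean (the Bayes estimate used in the algorithm) coincides with its mode, i.e.\ with the minimizer of that objective; and because $\boldsymbol{\beta}^{(k)}\mapsto\mathbf{U}_k\boldsymbol{\beta}^{(k)}$ is a linear bijection onto $\mathcal{S}_k$, minimizing over $\boldsymbol{\beta}^{(k)}$ is the same as minimizing that bracket over $\boldsymbol{\mu}^{(k)}\in\mathcal{S}_k$, i.e.\ as maximizing \eqref{eq:loglike}.

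First I would propagate the covariance down the chain. Integrating out $\boldsymbol{\mu}$ gives $\mathbf{y}\mid\boldsymbol{\beta}^{(m)}\sim\mathrm{N}(\boldsymbol{\mu}^{(m)},(\sigma^2+\sigma_m^2)\mathbf{I}_n)$. For the inductive step, the line $\boldsymbol{\beta}^{(j)}\mid\boldsymbol{\beta}^{(j-1)}\sim\mathrm{N}(\mathbf{U}_j^{T}\boldsymbol{\mu}^{(j-1)},\sigma_{j-1}^2\mathbf{I}_{d_j})$, together with $\mathbf{U}_j\mathbf{U}_j^{T}=\mathbf{P}_j$ and the nesting $\mathcal{S}_{j-1}\subset\mathcal{S}_j$ (so that $\mathbf{P}_j\boldsymbol{\mu}^{(j-1)}=\boldsymbol{\mu}^{(j-1)}$), gives $\boldsymbol{\mu}^{(j)}=\mathbf{U}_j\boldsymbol{\beta}^{(j)}\mid\boldsymbol{\beta}^{(j-1)}\sim\mathrm{N}(\boldsymbol{\mu}^{(j-1)},\sigma_{j-1}^2\mathbf{P}_j)$; since $\mathbf{y}\mid\boldsymbol{\beta}^{(j)}$ depends on $\boldsymbol{\beta}^{(j)}$ only through $\boldsymbol{\mu}^{(j)}$, convolving these normals yields
\[
\mathbf{y}\mid\boldsymbol{\beta}^{(k)}\sim\mathrm{N}\bigl(\boldsymbol{\mu}^{(k)},\Sigma_k\bigr),\qquad
\Sigma_k=(\sigma^2+\sigma_m^2)\mathbf{I}_n+\sum_{j=k+1}^{m}\sigma_{j-1}^2\,\mathbf{P}_j .
\]

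The crucial observation is that every $\mathbf{P}_j$ occurring in $\Sigma_k$ projects onto a subspace containing $\mathcal{S}_k$; hence these projections mutually commute, commute with $\mathbf{P}_k$, and each restricts to the identity on $\mathcal{S}_k$. Therefore $\Sigma_k$ leaves $\mathcal{S}_k$ and $\mathcal{S}_k^{\perp}$ invariant and acts on $\mathcal{S}_k$ as the scalar $c_k:=\sigma^2+\sigma_m^2+\sigma_{m-1}^2+\cdots+\sigma_k^2$, so $\Sigma_k^{-1}\mathbf{P}_k=c_k^{-1}\mathbf{P}_k$. Decomposing $\mathbf{y}-\boldsymbol{\mu}^{(k)}=(\mathbf{P}_k\mathbf{y}-\boldsymbol{\mu}^{(k)})+(\mathbf{I}_n-\mathbf{P}_k)\mathbf{y}$ into its $\mathcal{S}_k$ and $\mathcal{S}_k^{\perp}$ components, the cross term in $(\mathbf{y}-\boldsymbol{\mu}^{(k)})^{T}\Sigma_k^{-1}(\mathbf{y}-\boldsymbol{\mu}^{(k)})$ vanishes, the $\mathcal{S}_k^{\perp}$ contribution is free of $\boldsymbol{\mu}^{(k)}$, and the $\mathcal{S}_k$ contribution is $c_k^{-1}\|\mathbf{P}_k\mathbf{y}-\boldsymbol{\mu}^{(k)}\|^2$, which by Pythagoras ($\boldsymbol{\mu}^{(k)}\in\mathcal{S}_k$) equals $c_k^{-1}\|\mathbf{y}-\boldsymbol{\mu}^{(k)}\|^2$ up to a further additive constant. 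For the prior term, orthonormality of the columns of $\mathbf{U}_k$ and $\mathbf{P}_k\boldsymbol{\mu}^{(k-1)}=\boldsymbol{\mu}^{(k-1)}$ turn $\sigma_{k-1}^{-2}\|\boldsymbol{\beta}^{(k)}-\mathbf{U}_k^{T}\boldsymbol{\mu}^{(k-1)}\|^2$ into $\sigma_{k-1}^{-2}\|\boldsymbol{\mu}^{(k)}-\boldsymbol{\mu}^{(k-1)}\|^2$, with shrinkage target $\boldsymbol{\mu}^{(k-1)}=\mathbf{U}_{k-1}\hat{\boldsymbol{\beta}}^{(k-1)}$ the previous fit, exactly as the statement requires. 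Adding the two pieces and multiplying by the positive constant $c_k$ gives $\tfrac12\bigl[\|\mathbf{y}-\boldsymbol{\mu}^{(k)}\|^2+(c_k/\sigma_{k-1}^2)\|\boldsymbol{\mu}^{(k)}-\boldsymbol{\mu}^{(k-1)}\|^2\bigr]$; minimizing this over $\mathcal{S}_k$ is the same as maximizing \eqref{eq:loglike} with $\lambda_k=c_k/\sigma_{k-1}^2$, which is \eqref{eq:lambdak}. The topmost step, where $\mathcal{S}_k=\mathbb{R}^n$ and $\Sigma_k=\sigma^2\mathbf{I}_n$, is a trivial special case needing no projection argument.

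The step I expect to be the main obstacle is precisely this handling of the non-isotropic covariance $\Sigma_k$: one must recognize that although $\Sigma_k$ is not a scalar multiple of the identity on $\mathbb{R}^n$, it \emph{is} one on the constraint subspace $\mathcal{S}_k$, and one must organize the $\mathcal{S}_k\oplus\mathcal{S}_k^{\perp}$ decomposition so that the part of $\mathbf{y}$ orthogonal to $\mathcal{S}_k$ enters the objective only through an additive constant. Everything else --- the telescoping that collapses $\sum_{j=k+1}^{m}\sigma_{j-1}^2\mathbf{P}_j$ to the scalar $c_k$ on $\mathcal{S}_k$, the identities $\mathbf{P}_j\boldsymbol{\mu}^{(j-1)}=\boldsymbol{\mu}^{(j-1)}$ from nestedness, and the Gaussian convolutions --- is routine bookkeeping once this structure is in place.
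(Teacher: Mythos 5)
Your proposal is correct and follows essentially the same route as the paper's own proof: marginalize the hierarchy above level $k$ to get $\mathbf{y}\mid\boldsymbol{\beta}^{(k)}$ Gaussian with mean $\boldsymbol{\mu}^{(k)}$, observe that the covariance acts as the scalar $c_k=\sigma^2+\sigma_m^2+\cdots+\sigma_k^2$ on $\mathcal{S}_k$ (the paper writes this by block-diagonalizing $(\sigma^2+\sigma_m^2)\mathbf{I}_n+\mathbf{U}_m\mathbf{D}\mathbf{U}_m^T$ in the $\mathbf{U}$ basis, which is the same fact as your projection-sum/invariant-subspace argument), discard the $\mathcal{S}_k^{\perp}$ piece as a constant via Pythagoras, and convert the conditional prior into $\sigma_{k-1}^{-2}\|\boldsymbol{\mu}^{(k)}-\boldsymbol{\mu}^{(k-1)}\|^2$ using orthonormality of $\mathbf{U}_k$ and nestedness, yielding $\lambda_k=c_k/\sigma_{k-1}^2$.
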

\begin{proof}
  See Appendix \ref{app:empBayes1}.
\end{proof}

Theorem \ref{thm:empBayes1} also indicates we can estimate all the
variance parameters $\sigma_k^2$ from the data and then compute a
vector of $\bm{\lambda}$. Let's assume the ratings are centered (by the global mean)
so we put $\boldsymbol{\beta}^{(0)} = \mathbf{0}$. Then the multi-level
hierarchical model \eqref{eq:hiermodel} is actually a linear random effects model
\begin{equation}
\label{eq:randeff}
\begin{aligned}
\mathbf{y} | \boldsymbol{\mu} &\sim \mathrm{N}(\boldsymbol{\mu},\sigma^2 \mathbf{I}_n),~\sigma^2 \mathrm{is~known}\\
\boldsymbol{\mu} | \boldsymbol{\beta}^{(m)} &\sim \mathrm{N}(\mathbf{U}_m\boldsymbol{\beta}^{(m)},\sigma_m^2 \mathbf{I}_n), \\
\boldsymbol{\beta}^{(m)} &\sim \mathrm{N}(\mathbf{0},\boldsymbol{\Sigma}),\\
\boldsymbol{\Sigma} &= \left(
\begin{array}{cccc}
\tau_0^2 \mathbf{I}_{d_1} & & & \\
& \tau_1^2 \mathbf{I}_{d_2 - d_1} & & \\
& & \ddots & \\
& & & \tau_{m-1}^2 \mathbf{I}_{d_m - d_{m-1}} \\
\end{array}
\right), \\
\tau_i^2 &= (\sigma_i^2 + \ldots + \sigma_{m-1}^2), ~i=0,\cdots,m.
\end{aligned}
\end{equation}

Now we can state our main theorem that guarantees the effectiveness of
our algorithm for balanced table.
\begin{theorem} \label{thm:empBayes2}
For balanced table, if we use \eqref{eq:lambdak} to compute $\bm{\lambda}$,
the solution to our algorithm is the posterior mean of $\boldsymbol{\beta}$ in the linear random effects model \eqref{eq:randeff}.
\end{theorem}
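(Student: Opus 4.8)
The plan is to evaluate both sides of the claimed identity in the orthonormal coordinate system supplied by $\mathbf{U}$, in which every step of Algorithm~\ref{alg:hbmodel} and the posterior computation split into $n$ independent scalar problems, and then to check that the two produce the same per-coordinate shrinkage factor. Throughout I write $\mathbf{z} = \mathbf{U}^{T}\mathbf{y}$; since $\mathcal{S}_{k} = \mathrm{col}(\mathbf{U}_{k})$ is the span of the first $d_{k}$ columns of $\mathbf{U}$, ``belonging to $\mathcal{S}_{k}$'' is the same as ``supported on coordinates $1,\dots,d_{k}$.'' Partition $\{1,\dots,n\}$ into blocks $G_{i} = \{d_{i}+1,\dots,d_{i+1}\}$ for $0\le i\le m-1$ (with $d_{0}=0$) and $G_{m} = \{d_{m}+1,\dots,n\}$; a coordinate in $G_{i}$ ($i\le m-1$) is exactly one of the directions that first enters the model at level $i+1$. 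I would also record the elementary consequence of $\mathbf{U}_{k} = (\mathbf{U}_{k-1}\ \mathbf{V}_{k})$ and $\boldsymbol{\beta}^{(0)}=\mathbf{0}$ that $\mathbf{U}_{k}^{T}\boldsymbol{\mu}^{(k-1)} = \bigl((\boldsymbol{\beta}^{(k-1)})^{T},\mathbf{0}^{T}\bigr)^{T}$, from which unwinding the chain in \eqref{eq:hiermodel} gives the marginal law $\boldsymbol{\beta}^{(m)}\sim\mathrm{N}(\mathbf{0},\boldsymbol{\Sigma})$ of \eqref{eq:randeff}, with $\boldsymbol{\Sigma}$ block-diagonal equal to $\tau_{i}^{2}\mathbf{I}$ on $G_{i}$ and $\tau_{i}^{2} = \sigma_{i}^{2}+\dots+\sigma_{m-1}^{2}$.

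Next I would compute the target. Writing $\mathbf{y} = \mathbf{U}_{m}\boldsymbol{\beta}^{(m)} + \boldsymbol{\eta} + \boldsymbol{\epsilon}$ with independent Gaussians $\boldsymbol{\eta}\sim\mathrm{N}(\mathbf{0},\sigma_{m}^{2}\mathbf{I})$, $\boldsymbol{\epsilon}\sim\mathrm{N}(\mathbf{0},\sigma^{2}\mathbf{I})$, the pair $(\boldsymbol{\beta}^{(m)},\mathbf{y})$ is jointly mean-zero Gaussian with $\mathrm{Cov}(\boldsymbol{\beta}^{(m)},\mathbf{y}) = \boldsymbol{\Sigma}\mathbf{U}_{m}^{T}$ and $\mathrm{Var}(\mathbf{y}) = \mathbf{U}_{m}\boldsymbol{\Sigma}\mathbf{U}_{m}^{T}+c\mathbf{I}$ where $c := \sigma^{2}+\sigma_{m}^{2}$, so the posterior mean is $\boldsymbol{\Sigma}\mathbf{U}_{m}^{T}(\mathbf{U}_{m}\boldsymbol{\Sigma}\mathbf{U}_{m}^{T}+c\mathbf{I})^{-1}\mathbf{y}$. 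The identity $\mathbf{U}_{m}^{T}(\mathbf{U}_{m}\boldsymbol{\Sigma}\mathbf{U}_{m}^{T}+c\mathbf{I})^{-1} = (\boldsymbol{\Sigma}+c\mathbf{I})^{-1}\mathbf{U}_{m}^{T}$, checked by verifying $(\mathbf{U}_{m}\boldsymbol{\Sigma}\mathbf{U}_{m}^{T}+c\mathbf{I})\mathbf{U}_{m}(\boldsymbol{\Sigma}+c\mathbf{I})^{-1} = \mathbf{U}_{m}$ and using $\mathbf{U}_{m}^{T}\mathbf{U}_{m}=\mathbf{I}$, together with the diagonality of $\boldsymbol{\Sigma}$, gives for $j\in G_{i}$ that $E[\beta^{(m)}_{j}\mid\mathbf{y}] = \frac{\tau_{i}^{2}}{\tau_{i}^{2}+c}\,z_{j}$.

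Finally I would compute the algorithm. By Theorem~\ref{thm:empBayes1}, the level-$k$ fit solves $\min_{\boldsymbol{\mu}\in\mathcal{S}_{k}}\|\mathbf{y}-\boldsymbol{\mu}\|^{2}+\lambda_{k}\|\boldsymbol{\mu}-\hat{\boldsymbol{\mu}}^{(k-1)}\|^{2}$ with $\lambda_{k}$ from \eqref{eq:lambdak}; in the rotated coordinates this is $n$ decoupled scalar ridge problems, and since $\hat{\boldsymbol{\mu}}^{(k-1)}\in\mathcal{S}_{k-1}\subset\mathcal{S}_{k}$ its $j$-th coordinate $\hat c_{j}^{(k)}$ satisfies, for $j\in G_{i}$, $\hat c_{j}^{(k)} = 0$ for $k\le i$, $\hat c_{j}^{(i+1)} = z_{j}/(1+\lambda_{i+1})$, and $\hat c_{j}^{(k)} = (z_{j}+\lambda_{k}\hat c_{j}^{(k-1)})/(1+\lambda_{k})$ for $k>i+1$. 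Setting $D_{k} := \sigma^{2}+\sigma_{m}^{2}+\tau_{k}^{2}$ (so $D_{m}=c$ and $D_{k-1}=D_{k}+\sigma_{k-1}^{2}$), formula \eqref{eq:lambdak} is exactly $\lambda_{k} = D_{k}/\sigma_{k-1}^{2}$, hence $1+\lambda_{k} = D_{k-1}/\sigma_{k-1}^{2}$, and the recursion reads $\hat c_{j}^{(k)} = (\sigma_{k-1}^{2}/D_{k-1})z_{j} + (D_{k}/D_{k-1})\hat c_{j}^{(k-1)}$. A one-line induction then yields $\hat c_{j}^{(k)} = (1 - D_{k}/D_{i})z_{j}$, so at $k=m$, $\hat c_{j}^{(m)} = (1 - c/(\tau_{i}^{2}+c))z_{j} = \frac{\tau_{i}^{2}}{\tau_{i}^{2}+c}z_{j}$. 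Thus $\hat{\boldsymbol{\beta}}^{(m)} = \mathbf{U}_{m}^{T}\hat{\boldsymbol{\mu}}^{(m)}$ coincides coordinate by coordinate with $E[\boldsymbol{\beta}^{(m)}\mid\mathbf{y}]$, which is the assertion.

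The substantive step is the telescoping recursion in the last paragraph: the work is essentially bookkeeping --- tracking which coordinates are active at each level, handling the boundary level correctly in \eqref{eq:lambdak}, and invoking the centering convention $\boldsymbol{\beta}^{(0)}=\mathbf{0}$ so that the recursion is seeded at $0$ --- and being explicit that ``the solution to our algorithm'' means the level-$m$ estimate $\hat{\boldsymbol{\beta}}^{(m)}$ written in the $\mathbf{U}_{m}$ basis. The remaining ingredients (Gaussian conditioning and the single projection identity) are routine and are made transparent by the orthogonal change of coordinates; the balanced-table hypothesis is what guarantees that this change of coordinates decouples the problem, via the unweighted penalty in \eqref{eq:loglike} and Theorem~\ref{thm:empBayes1}.
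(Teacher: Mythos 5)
Your proposal is correct: the per-coordinate recursion is set up properly, the seed $\hat c_j^{(i+1)}=z_j/(1+\lambda_{i+1})$ and the induction $\hat c_j^{(k)}=(1-D_k/D_i)z_j$ check out with $\lambda_k=D_k/\sigma_{k-1}^2$, $1+\lambda_k=D_{k-1}/\sigma_{k-1}^2$, and the Gaussian-conditioning computation $E[\beta^{(m)}_j\mid\mathbf{y}]=\tau_i^2 z_j/(\tau_i^2+c)$ is right. The underlying mechanism is the same as the paper's --- everything hinges on the orthogonal basis $\mathbf{U}$ decoupling both the penalized fits and the posterior, followed by a telescoping identity --- but your bookkeeping differs in two ways. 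First, the paper computes the posterior mean via Tweedie's formula applied to the marginal of $\mathbf{y}$ and keeps the comparison at the level of projection matrices $\mathbf{U}_k\mathbf{U}_k^T$ and $\mathbf{V}_k\mathbf{V}_k^T$ (equations \eqref{eq:bayessol} and \eqref{eq:hiersol}), whereas you use direct joint-Gaussian conditioning plus the push-through identity $\mathbf{U}_m^T(\mathbf{U}_m\boldsymbol{\Sigma}\mathbf{U}_m^T+c\mathbf{I})^{-1}=(\boldsymbol{\Sigma}+c\mathbf{I})^{-1}\mathbf{U}_m^T$ and a scalar induction; your version is more elementary and makes the shrinkage factor per block explicit. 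Second, the paper treats ``the solution to our algorithm'' as including the final $(m{+}1)$-st step of Algorithm \ref{alg:hbmodel}, i.e.\ $\hat{\boldsymbol{\mu}}=\frac{\sigma_m^2}{\sigma^2+\sigma_m^2}\mathbf{y}+\frac{\sigma^2}{\sigma^2+\sigma_m^2}\mathbf{U}_m\hat{\boldsymbol{\beta}}^{(m)}$, and verifies $\hat{\boldsymbol{\mu}}=E[\boldsymbol{\mu}\mid\mathbf{y}]$, while you stop at level $m$ and verify $\hat{\boldsymbol{\beta}}^{(m)}=E[\boldsymbol{\beta}^{(m)}\mid\mathbf{y}]$, which is the literal reading of the theorem statement; the two formulations are equivalent because $E[\boldsymbol{\mu}\mid\mathbf{y}]=\frac{\sigma_m^2}{\sigma^2+\sigma_m^2}\mathbf{y}+\frac{\sigma^2}{\sigma^2+\sigma_m^2}\mathbf{U}_mE[\boldsymbol{\beta}^{(m)}\mid\mathbf{y}]$ by linearity of the last conditional mean, a one-line remark you could add to cover the paper's $\boldsymbol{\mu}$-level version as well. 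Your explicit verification that unwinding \eqref{eq:hiermodel} with $\boldsymbol{\beta}^{(0)}=\mathbf{0}$ yields the block-diagonal $\boldsymbol{\Sigma}$ of \eqref{eq:randeff} is a useful addition that the paper only asserts.
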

\begin{proof}
  See Appendix \ref{app:empBayes2}.
\end{proof}

In reality we don't know what $\sigma_0^2$, $\sigma_1^2$,\ldots,
$\sigma_m^2$ are, but we can use various methods developed in ANOVA to
estimate them. By doing this, our algorithm is equivalent to an
empirical Bayes solution of the multi-level hierarchical model.

Following the general principle of variance component analysis, we can
compute the expectation of certain quadratic forms of $\mathbf{y}$, then use these quadratic forms to find an unbiased estimator of $\sigma_k^2$. For example,
\begin{equation} \label{eq:estimatingequation1}
\begin{aligned}
\mathrm{E}[\mathbf{y}^T \mathbf{U}_k \mathbf{U}_k^T \mathbf{y}]
&= \mathrm{tr}[\mathbf{U}_k \mathbf{U}_k^T \mathrm{Var}(\mathbf{y})]\\
&= \mathrm{tr}[\mathbf{U}_k \mathbf{U}_k^T (\sigma^2 \mathbf{I}_n + \sigma_m^2 \mathbf{I}_n + \mathbf{U}_m\boldsymbol{\Sigma}\mathbf{U}_m^T)]\\
&= d_k (\sigma^2 + \sigma_m^2) +  \sum_{j=0}^{k-1}(d_{j+1} - d_j)
\tau_j^2, \\
& \forall~k = 1, \cdots,m
\end{aligned}
\end{equation}
Also
\begin{equation} \label{eq:estimatingequation2}
  \begin{aligned}
\mathrm{E}[\mathbf{y}^T \mathbf{U} \mathbf{U}^T \mathbf{y}] &= \mathrm{E}
[\|\mathbf{y}\|^2] \\
& = \mathrm{tr}[(\sigma^2+\sigma_m^2) \mathbf{I}_n +
\mathbf{U}_m \boldsymbol{\Sigma} \mathbf{U}_m^T] \\
& = n (\sigma^2 + \sigma_m^2) + \sum_{j=0}^{m-1} d_{j+1} \sigma_j^2
  \end{aligned}
\end{equation}
Thus we have $m+1$ equations and $m+1$ parameters (recall we assume
$\sigma^2$ is known), the unbiased estimator of the prior variances
can be obtained by solving this linear system.

In fact, if we call $\mathbf{U}_{m+1} = \mathbf{U}$ and $d_{m+1} = n$,
we have
\begin{equation}
  \begin{aligned}
\mathrm{E}[\mathbf{y}^T \mathbf{U}_{k+1} \mathbf{U}_{k+1}^T \mathbf{y} -
\mathbf{y}^T \mathbf{U}_k \mathbf{U}_k^T \mathbf{y}]
= (d_{k+1} - d_k) \tau_k^2, \\
\forall~k=0,\cdots, m
  \end{aligned}
\end{equation}
So
\begin{equation} \label{eq:sigmahat}
\begin{aligned}
\hat{\sigma}^2_k =& \frac{\mathbf{y}^T \mathbf{U}_{k+1} \mathbf{U}_{k+1}^T \mathbf{y} -
\mathbf{y}^T \mathbf{U}_k \mathbf{U}_k^T \mathbf{y}}{d_{k+1} - d_k} \\
&- \frac{\mathbf{y}^T \mathbf{U}_{k+2} \mathbf{U}_{k+2}^T \mathbf{y} -
\mathbf{y}^T \mathbf{U}_{k+1} \mathbf{U}_{k+1}^T \mathbf{y}}{d_{k+2} -
d_{k+1}}, \\
& \qquad \qquad \qquad \qquad k=0,\cdots,m-1 \\
\hat{\sigma}^2_m =& \frac{\mathbf{y}^T \mathbf{y} - \mathbf{y}^T \mathbf{U}_m
\mathbf{U}_m^T \mathbf{y}}{n - d_m} - \sigma^2
\end{aligned}
\end{equation}
are unbiased for estimating $\sigma_k^2$. Note that it is also
possible to treat $\sigma^2$ as a tuning parameter, but usually $\sigma^2$ can be estimated quite accurately from the data.

The above formulae is only one specific (perhaps the easiest) choice of
estimating equations. More general methods in analysis of variance for
unbalanced tables, such as Restricted Maximum Likelihood (REML) can
also be used (see, for example, \citet{Searle1992}).

\subsection{Unbalanced Table}
\label{sec:unbalanced-table}


If the observed table is unbalanced, then we cannot prove any exact
conclusions like \ref{thm:empBayes1} or \ref{thm:empBayes2}. In this
case, the log likelihood function we maximize is
\begin{equation} \label{eq:loglikeunbalanced}
-\frac{1}{2} [(\mathbf{y} - \boldsymbol{\mu}^{(k)})^T \mathbf{N} (\mathbf{y} - \boldsymbol{\mu}^{(k)}) + \lambda_k \|\boldsymbol{\mu}^{(k)} - \boldsymbol{\mu}^{(k-1)}\|^2 ]
\end{equation}
where $\mathbf{N}$ is a diagonal matrix with weights.

Similar to the previous section, we may sequentially maximize
\eqref{eq:loglikeunbalanced} for $k=1,2,\ldots,m$ to estimate $\bm{\mu}$ and $\boldsymbol{\beta}$ in the following linear random effects model
\begin{equation}
\label{eq:randeffunbalanced}
\begin{aligned}
\mathbf{y} | \boldsymbol{\mu} &\sim \mathrm{N}(\boldsymbol{\mu},\sigma^2 \mathbf{N}^{-1}),~\sigma^2 \mathrm{is~known}\\
\boldsymbol{\mu} | \boldsymbol{\beta}^{(m)} &\sim \mathrm{N}(\mathbf{U}_m\boldsymbol{\beta}^{(m)},\sigma_m^2 \mathbf{I}_n), \\
\boldsymbol{\beta}^{(m)} &\sim \mathrm{N}(\mathbf{0},\boldsymbol{\Sigma}),\\
\end{aligned}
\end{equation}
Here $\boldsymbol{\Sigma}$ is the same as the one
\eqref{eq:randeff}. This model is simply the unbalanced
version of \eqref{eq:randeff}.

The theory we derived in Section \ref{sec:balanced} can be used to
give us a vector of reasonable penalty parameters $\bm{\lambda}$. One way to do this is to pretend all the observed cells have the
same weight and use equations in \eqref{eq:sigmahat} to estimate all the
$\sigma_k^2$. Then one can compute ratios between the variance
components (i.e. \eqref{eq:lambdak}) to get $\bm{\lambda}$. The estimate
$\hat{\sigma}_k^2$ is still unbiased, because the estimation equations
\eqref{eq:estimatingequation1} and \eqref{eq:estimatingequation2} still hold true after replacing $\sigma^2$
with $\sigma^2 \mathrm{tr}(\mathbf{N}^{-1}/n)$.

\section{Implementation}
\label{sec:implementation}

In this section we discuss some implementation details of HANOVA.

\subsection{Preprocessing}
\label{sec:preprocessing}

In real applications, many data sets are not summarized in the format
of cell means and weights. Usually we may have multiple units in a
cell and every unit receive multiple reviews. A mixed-effect model
describing this is
\begin{equation}
\label{eq:singlecell}
\begin{aligned}
  y_{ci} &\sim \mathrm{N}(\mu_C + \alpha_i, \sigma_{r}^2/n_i), \\
  \alpha_i &\sim \mathrm{N}(0, \sigma_u^2).
\end{aligned}
\end{equation}
Here $\mu_c$ is the cell mean, $\alpha_i$ is the unit effect,
$\sigma_{r}^2$ is the variance of user's rating. $\sigma_r$ may actually depend
on the unit, and the procedure below can be slightly modified to this
heterogeneous case. $n_i$ is the number of
reviewers of that restaurant, $\sigma_u^2$ is the variance of the
unit effect.

The preprocessing procedure for this model is described in Algorithm
\ref{alg:preprocess}. Note that since we use the estimated variance
directly as weight in step 3, the $\sigma^2$ defined in
\eqref{eq:hiermodel} is $1$.

\begin{algorithm}[hbtp]
\caption{Data Preprocessing for HANOVA}
\label{alg:preprocess}
\begin{algorithmic}[1]
\STATE Estimate $\sigma_u^2$ and $\sigma_r^2$ based on all the observed
  units.
\STATE Plug $\hat{\sigma}_u^2$ and $\hat{\sigma}_r^2$ in
  \eqref{eq:singlecell} and estimate the $\mu_c$ by
\[
\hat{\mu}_c = \frac{\sum_{i} \frac{y_{ci}}{\hat{\sigma}_u^2 +
    \hat{\sigma}_r^2/n_i}}{\sum_{i} \frac{1}{\hat{\sigma}_u^2 +
    \hat{\sigma}_r^2/n_i}}
\sim \mathrm{N}(\mu_c, \frac{1}{\sum_{i} \frac{1}{\hat{\sigma}_u^2 +
    \hat{\sigma}_r^2/n_i}})
\]
\STATE Use $\hat{\mu}_C$ and $\sum_{i} \frac{1}{\hat{\sigma}_u^2 +
    \hat{\sigma}_r^2/n_i}$ as value and weight in HANOVA to obtain
  regularized estimate of $\mu_c$, denote by $\hat{\hat{\mu}}_c$.
\STATE Estimate individual restaurant effect $\alpha_i$ by shrinking
  $y_{ci}$ towards $\hat{\hat{\mu}}_c$, using the following formula
\[
\hat{\hat{\mu}} + \hat{\alpha_{i}} = \frac{n_iy_{ci}/\hat{\sigma}_r^2 +
  \hat{\hat{\mu}}_{c}/\hat{\sigma}_u^2}{n_{i}/\hat{\sigma}_r^2 + 1/\hat{\sigma}_u^2}
\]
\end{algorithmic}
\end{algorithm}

\subsection{Normal Equations}
In our algorithm, the $k$-th order model is
\begin{equation*} \label{objective}
\min_{\boldsymbol{\beta}^{(k)}} \sum_{I \in \Omega} n_I (y_I -
\mu^{(k)}_I)^2 + \lambda_k \sum_{I \in \Omega} (\mu_I^{(k)} - \mu_I^{(k-1)})^2
\end{equation*}
Here $\mu^{(k)}_I = \sum_{J \subset [m],|J|=k}
\beta^{J}_{I_J}$. Differentiate the above loss with respect to each
$\mu^{J}_L$ and equate to zero
\begin{equation*}
  \begin{aligned}
\sum_{I_J = L} (n_I + \lambda_k) \sum_{K \subset [m],|K|=k}
\beta^{K}_{I_K} =& \sum_{I_J = L} (n_I y_I + \lambda_k
\mu^{(k-1)}_I),\\
&\quad \forall~J \subset [m],|J|=k,L.
  \end{aligned}
\end{equation*}
So the coefficient of $\beta^{K}_{M}$ in the $(J,L)$-th equation is
\begin{equation*}
z^{J,L}_{K,M} =
\left\{
\begin{aligned}
&0&,&K=J,~ M \ne L\\
&\sum_{I_J = L} (n_I + \lambda_k)&,& K=J,~M=L\\
&\sum_{I_J = L, I_K=M} (n_I + \lambda_k)&,& K \ne J
\end{aligned}
\right.
\end{equation*}

Now the problem reduces to solve a large system of linear equations
\begin{equation*}
\sum_{K,M} z^{J,L}_{K,M} \beta^{K}_M = \sum_{I_J = L} (n_I y_I + \lambda_k \mu^{(k-1)}_I),\quad \forall J \subset [m],|J|=k,L.
\end{equation*}

\subsection{Backfitting}
We implemented the block coordinate descent algorithm (or backfitting
algorithm ) to solve the above equations. Each block contains
$\beta^J_\cdot$, i.e. all the effects for certain margins. For
example, when $J$ contains only one factor, this block contains all
the main effects associated with that factor. The psuedocode of our
algorithm is in Algorithm \ref{alg:pseudocode}.

Note that the basic building block of every iteration is
computing $\sum_{I_J = L} (n_I + \lambda_k) \mu_I(\mathrm{old})$. For a
fixed $J$, this amounts to a weighted table sum over $J^C$. This
operation can be easily parallelized.

\begin{algorithm}[hbtp]
\caption{psuedo-code of HANOVA}
\label{alg:pseudocode}
\begin{algorithmic}[1]
\STATE Choose $\mathrm{maxk}$ from $1,\cdots,m$.

\FOR {$k= 1 \to \mathrm{maxk}$}

  \FORALL {$|J| = k,~L \in \mathcal{L}(J)$}
  \STATE initialize $\beta^J_L \leftarrow \frac{\bar{y}}{{m \choose k}}$
  \STATE $u^J_L \leftarrow \sum_{I_J = L} (n_I y_I + \lambda_k \mu^{(k-1)}_I)$
  \STATE $z^{J,L}_{J,L} \leftarrow \sum_{I_J = L} (n_I + \lambda_k)$
  \ENDFOR

  \STATE $\boldsymbol{\mu}^{(k)} = \boldsymbol{\mu}^{(k-1)}$

  \REPEAT
  \FORALL {$|J|=k$}
      \STATE $\mathbf{s}^{J} \leftarrow \mathrm{apply}(\boldsymbol{\mu}^{(k)}(\mathbf{n}+\lambda_k),J^C,\mathrm{sum})$
      \STATE $\boldsymbol{\beta}^{J}_{\cdot} \leftarrow \boldsymbol{\beta}^{J}_{\cdot} + (\mathbf{u}^{J}_{\cdot} - \mathbf{s}^{J}_{\cdot}) / \mathbf{z}^{J,\cdot}_{J,\cdot}$
      \STATE Update $\boldsymbol{\mu}^{(k)}$
  \ENDFOR
\UNTIL{$\boldsymbol{\mu}$ converges.}
\ENDFOR
\end{algorithmic}
\end{algorithm}

\section{Results}
\label{sec:results}
\subsection{Simulations}

We first use simulations to verify the optimality of HANOVA. We
simulate $200$ instances from the linear mixed effect model
\eqref{eq:randeff} with four factors (each has 10 levels) and the true model contains 2-way interactions. The variance
parameters are chosen to be $(\sigma_0,\sigma_1,\sigma_2,\sigma) =
(2,1,0,0.5),\mathrm{or~}(2,1,0,1),\mathrm{or~}(1,2,0,1)$ (fairly
strong signal). The empirically estimated
$\lambda_1$ sometimes can be infinity. In this case, we
truncate $\lambda_1$ to $5$.

The results are shown in the violin plots in Figure \ref{fig:sub1} to \ref{fig:sub3}. In
all three cases HANOVA with oracle $\bm{\lambda}$ achieves Bayes risk and
HANOVA with empirical $\bm{\lambda}$ performs almost as good as the
oracle. The unshrinked linear model suffers from increased noise,
as we can see from the first and second plots.

The next simulation suggests HANOVA can fit a high order model
without overfitting. In this simulation, the true model
is of order $3$ but the three way interactions are very weak
($(\sigma_0,\sigma_1,\sigma_2,\sigma_3,\sigma) =
(2,1,\mathbf{0.5},0,1)$). The cells can have up to $10$
times different weights.

We generate $50$ instances and plot the estimation RMSE in Figure
\ref{fig:sub4}. The unshrinked linear model with all the
three-way interactions performs poorly, due to overfitting. The second
order HANOVA model is slightly better than second order linear model,
and the third order model is able to squeeze a little more.

\begin{figure}[htbp]
  \centering
  \begin{subfigure}[b]{0.48\textwidth}
    \includegraphics[width=\textwidth]{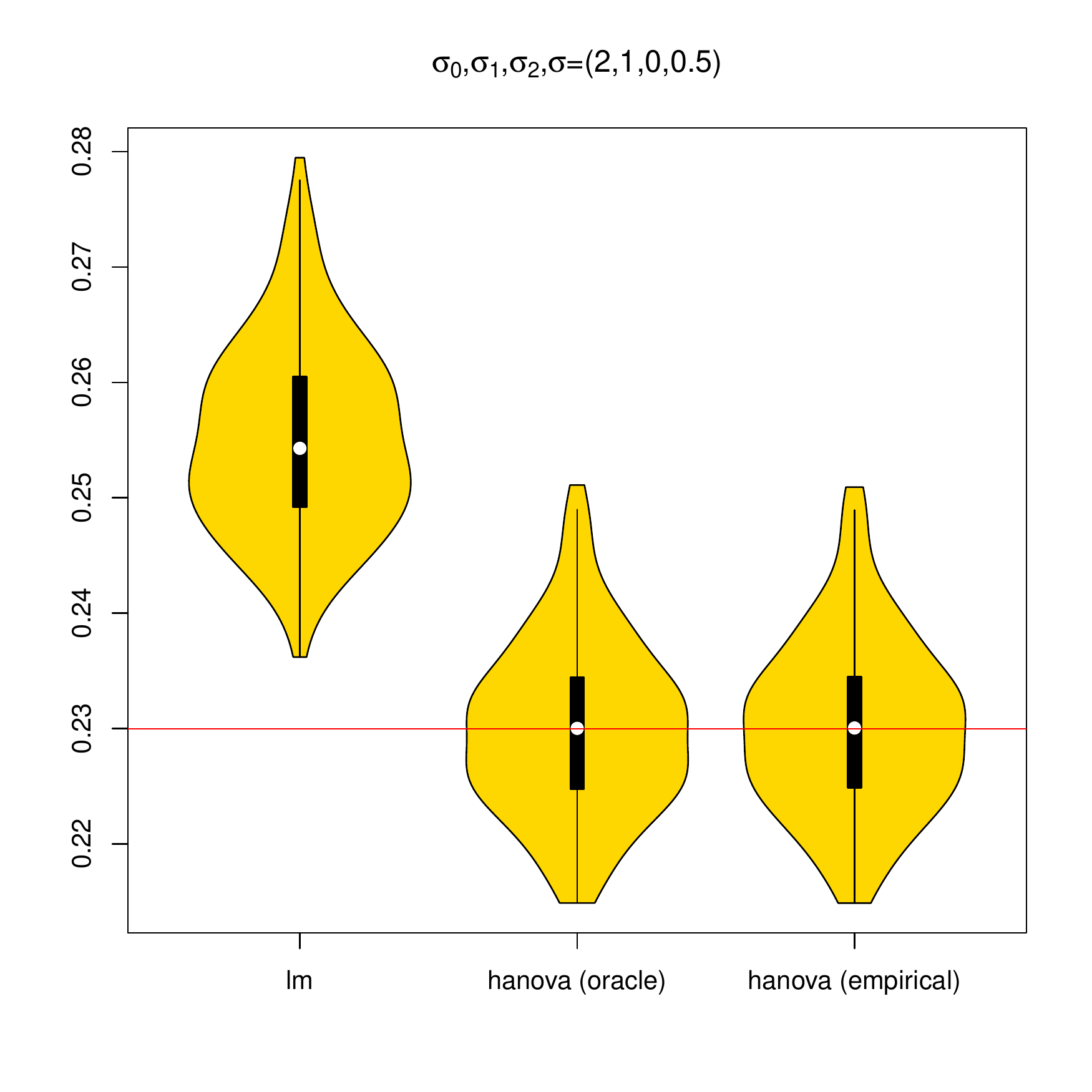}
    \caption{Balanced table with small noise.}
    \label{fig:sub1}
  \end{subfigure}
  ~
  \begin{subfigure}[b]{0.48\textwidth}
    \includegraphics[width=\textwidth]{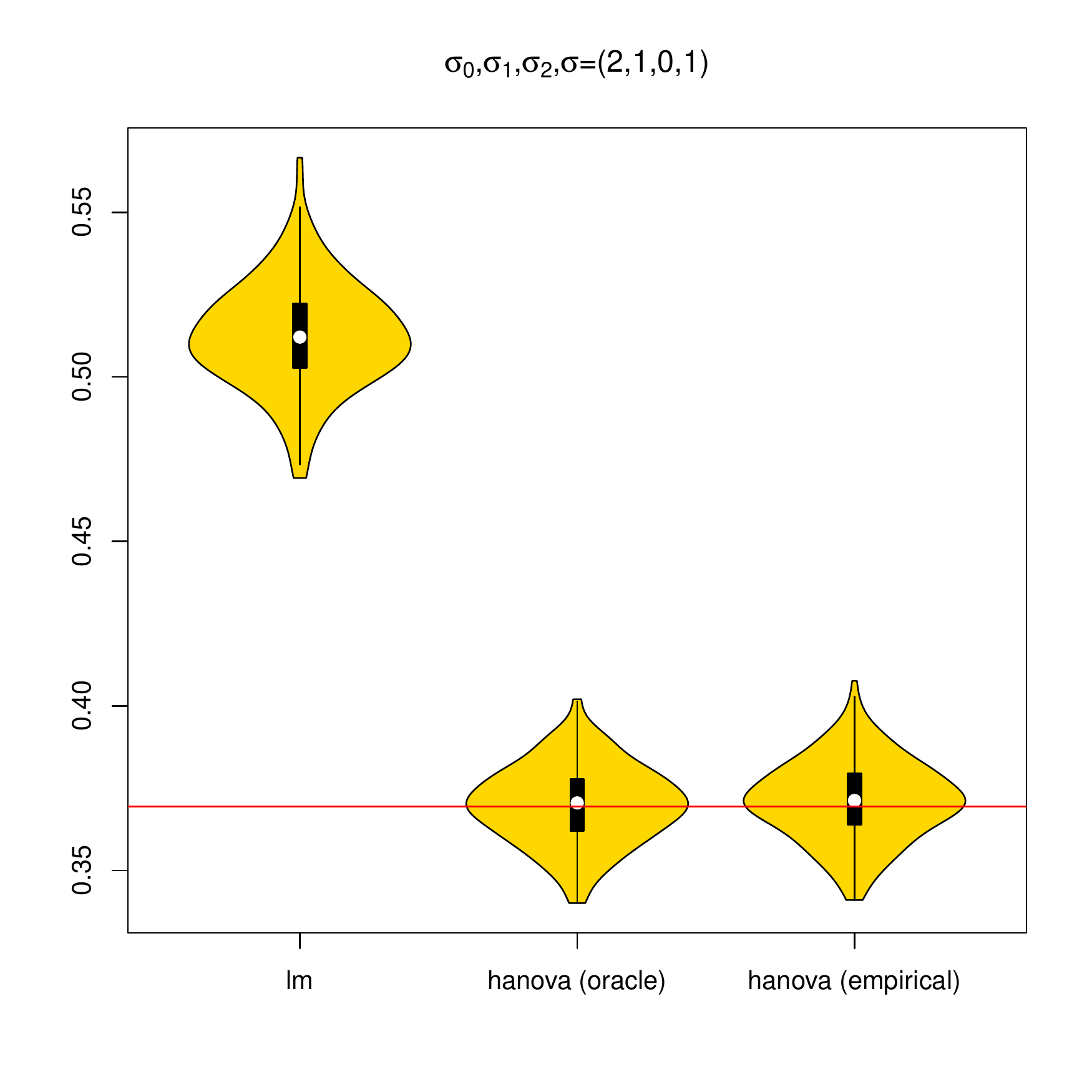}
    \caption{Balanced table with medium noise.}
    \label{fig:sub2}
  \end{subfigure}

  \begin{subfigure}[b]{0.48\textwidth}
    \includegraphics[width=\textwidth]{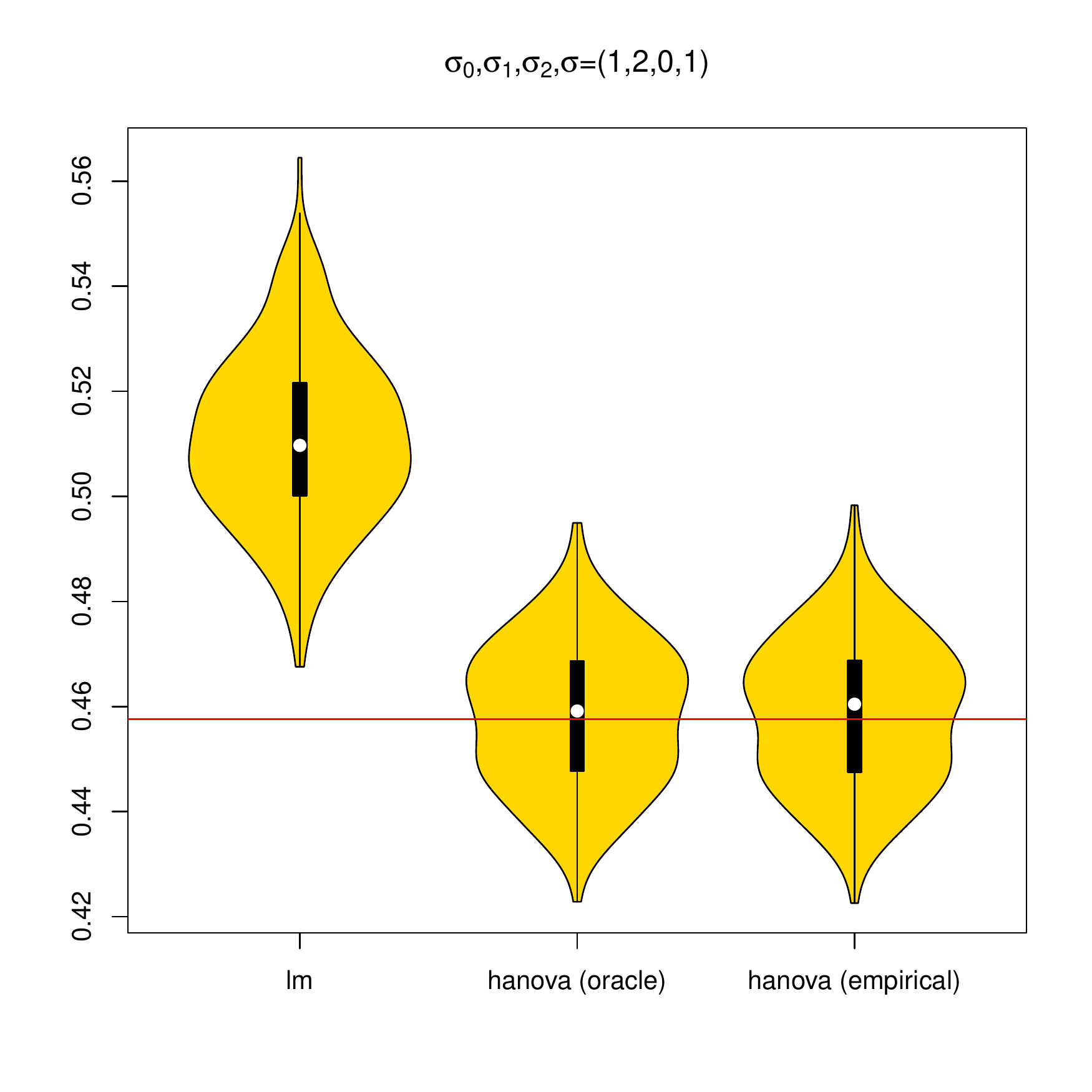}
    \caption{Balanced table with strong interactions.}
    \label{fig:sub3}
  \end{subfigure}
  ~
  \begin{subfigure}[b]{0.48\textwidth}
    \includegraphics[width=\textwidth]{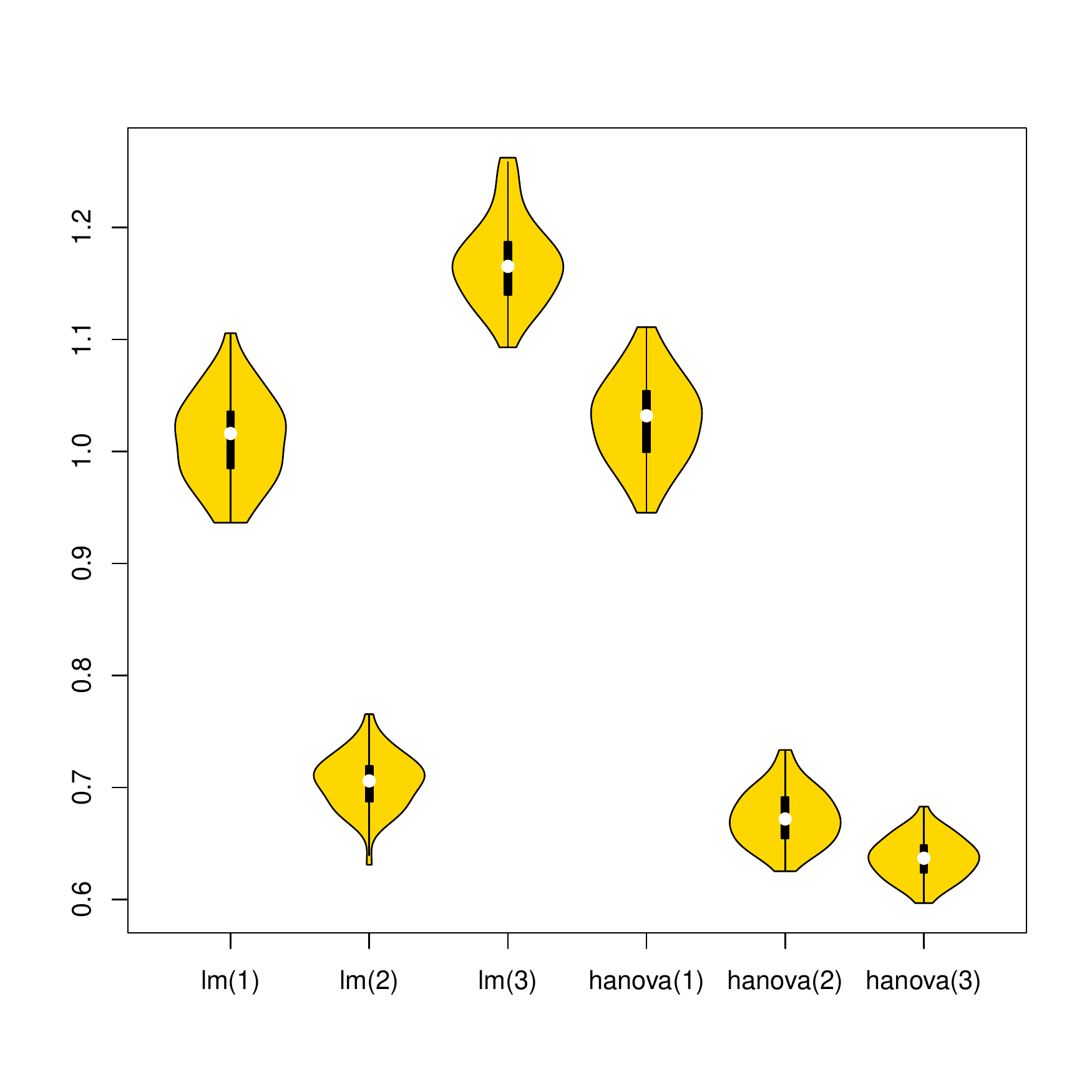}
    \caption{Unbalanced table with third-order interactions.}
    \label{fig:sub4}
  \end{subfigure}

  \caption{Comparison of HANOVA with linear model in various
    simulation settings. The three methods being compared in the first
    three plots are (left to
    right) linear model with two-way interactions, HANOVA with oracle
  $\bm{\lambda}$, and HANOVA with empirically estimated
  $\bm{\lambda}$. $y$-axis is the Root Mean Squared Error on all the
  observed cells. The red horizontal line is the Bayes risk. The
  methods being compared in the bottom-right plot are linear model (order 1, 2, 3), and HANOVA (order 1, 2, 3)
    with empirically estimated $\bm{\lambda}$.}
  \label{fig:1}
\end{figure}

\subsection{Real Data}

\subsubsection{IMDb Data}
\label{sec:imdb}

We ran our algorithm on an IMDb movie ratings data set. This data set
contains all the directors who have more than 20 movies on IMDb
record. The data set is a $5 \times 20 \times 33$ table with 276
observations (about $8\%$ of all the cells). The three factors
in the data set are decade, genre and director.

Since the observed cell is too sparse,  we will only fit our
hierarchical penalized model to first order and compare it with the
unshrinked main effects model. The procedure described in Section
\ref{sec:unbalanced-table} gives a suggested $\lambda_1 = 0.064$. We
further ran cross-validation on the training set and choose to use
$\lambda_1 = 0.1$. We use the squared root of the number of movies as
cell weights.

The first order HANOVA with $\lambda_1$ has test set Root Mean Squared
Error (RMSE)$=0.798$, while the unshrinked main effects model has
$\mathrm{RMSE} = 0.804$ and the cross-validated ridge regression main
effects model has $\mathrm{RMSE} = 0.801$.

\subsubsection{Zagat Data}
\label{sec:zagat-data}

This data set contains survey ratings of 1776 restaurants in New York
City in $33$ zip-codes, $25$ different type of cuisines and $7$ price
levels. Each rating is from $0$ to $30$. $20\%$ of the restaurants are left out as test data.

The standard deviation of reviewer random effect is about $7.51$. This large
random effect makes the ratings of many unpopular restaurants not trustable, as pointed out
in Section \ref{sec:introduction}. The number of reviews of each
restaurants follow a heavy-tailed distribution. One restaurant has
$8922$ reviews but $106$ restaurants have less than or equal to $5$
reviews. The average and median number of reviews are $117.1$ and
$42.0$.

The standard deviation of restaurant random effect (within cell) is
about $2.47$. This is the lower limit of RMSE of any prediction
methods based on the three factors. We use the
method described in \ref{sec:preprocessing} to preprocess the data.

The $\sigma_m$ estimated by HANOVA (defined in \eqref{eq:hiermodel}) is
$(0.423, 0.149, 0)$ (recall the $\sigma^2$ is always $1$ after the
preprocessing procedure \ref{alg:preprocess}) and the corresponding $\lambda = (5.71, 45.0,
\inf)$. This is the case that the data has fairly strong main effects
and weak interactions. Since the third-order interactions are too
weak, we only fit the first three HANOVA models. The RMSEs are $3.010$
(HANOVA grand-mean model), $2.641$ (HANOVA main effects model) and
$2.638$ (HANOVA second-order interactions model). The best $\lambda_2$
selected by cross-validation is $72.5$ and the corresponding
second-order interactions model has RMSE$=2.625$. As comparison, the
RMSE of random main effects model is $2.640$ and the RMSE of price-only random effects model is $2.767$.

The RMSE reduction of HANOVA interactions model is more significant
in cells that have fewer restaurants. This is one of HANOVA's the main
purposes.
Among the $355$ restaurants in the
test data set, $32$ of them receive rating adjustment greater than 1 from
main effects model to interactions model. There mean absolute
prediction error for these $32$ restaurants are reduced from $2.031$ to
$1.707$ by fitting a interactions model by HANOVA.

\bibliographystyle{plainnat}
\bibliography{ref}

\begin{thebibliography}{14}
\providecommand{\natexlab}[1]{#1}
\providecommand{\url}[1]{\texttt{#1}}
\expandafter\ifx\csname urlstyle\endcsname\relax
  \providecommand{\doi}[1]{doi: #1}\else
  \providecommand{\doi}{doi: \begingroup \urlstyle{rm}\Url}\fi

\bibitem[Barry(1990)]{Barry1990}
Daniel Barry.
\newblock Empirical bayes estimation of binomial probabilities in one-way and
  two-way layouts.
\newblock \emph{Journal of the Royal Statistical Society. Series D (The
  Statistician)}, 39\penalty0 (4):\penalty0 pp. 437--453, 1990.
\newblock ISSN 00390526.
\newblock URL \url{http://www.jstor.org/stable/2349087}.

\bibitem[Beran(2005)]{Beran2005}
Rudolf Beran.
\newblock Asp fits to multi-way layouts.
\newblock \emph{Annals of Institute of Statistical Mathematics}, 57:\penalty0
  201--220, 2005.

\bibitem[Casella(1992)]{Casella1992}
George Casella.
\newblock Illustrating empirical bayes methods.
\newblock \emph{Chemometrics and Intelligent Laboratory Systems}, 16:\penalty0
  107--125, 1992.

\bibitem[Diggle et~al.(1994)Diggle, Liang, and Zeger]{Diggle1994}
Peter~J. Diggle, Kung~Y. Liang, and Scott~L. Zeger.
\newblock \emph{Analysis of Longitudinal Data}.
\newblock Oxford University Press, 1994.

\bibitem[DuMouchel and Harris(1983)]{DuMouchelHarris1983}
William~H. DuMouchel and Jeffrey~E. Harris.
\newblock {Bayes Methods for Combining the Results of Cancer Studies in Humans
  and Other Species}.
\newblock \emph{Journal of the American Statistical Association}, 78\penalty0
  (382):\penalty0 293--308, June 1983.
\newblock ISSN 01621459.
\newblock \doi{10.2307/2288631}.
\newblock URL \url{http://dx.doi.org/10.2307/2288631}.

\bibitem[Efron(2010)]{Efron2010}
Bradley Efron.
\newblock \emph{Large-Scale Inference: Empirical Bayes Methods for Estimation,
  Testing, and Prediction}.
\newblock Cambridge University Press, 2010.
\newblock ISBN 978-0-521-19249-1.

\bibitem[Feldman et~al.(2012)Feldman, Gupta, and Frigyik]{Feldman2012}
Sergey Feldman, Maya~R. Gupta, and Bela~A. Frigyik.
\newblock Multi-task averaging.
\newblock In \emph{NIPS}, pages 1178--1186, 2012.
\newblock URL
  \url{http://dblp.uni-trier.de/db/conf/nips/nips2012.html/FeldmanGF12}.

\bibitem[Fisher(1918)]{Fisher1918}
Ronald~A. Fisher.
\newblock The correlation between relatives on the supposition of mendelian
  inheritance.
\newblock \emph{Philosophical Transactions of the Royal Society of Edinburgh},
  52:\penalty0 399--433, 1918.

\bibitem[Gelman(2005)]{Gelman2005}
Andrew Gelman.
\newblock {Analysis of variance - why it is more important than ever}.
\newblock \emph{The Annals of Statistics}, 33\penalty0 (1):\penalty0 1--53,
  February 2005.
\newblock ISSN 0090-5364.
\newblock \doi{10.1214/009053604000001048}.
\newblock URL \url{http://dx.doi.org/10.1214/009053604000001048}.

\bibitem[Hastie and Tibshirani(1990)]{Hastie1990}
T.~J. Hastie and R.~J. Tibshirani.
\newblock \emph{Generalized additive models}.
\newblock London: Chapman \& Hall, 1990.
\newblock ISBN 0412343908.

\bibitem[Robbins(1964)]{Robbins1964}
Herbert Robbins.
\newblock The empirical bayes approach to statistical decision problems.
\newblock \emph{The Annals of Mathematical Statistics}, 35\penalty0
  (1):\penalty0 1--20, 1964.
\newblock ISSN 00034851.
\newblock URL \url{http://www.jstor.org/stable/2238017}.

\bibitem[Searle(2006)]{Searle2006}
Shayle~R. Searle.
\newblock \emph{Linear Models for Unbalanced Data}.
\newblock Wiley series in probability and mathematical statistics. Applied
  probability and statistics. J. Wiley, New York, Chichester, 2006.
\newblock ISBN 978-0-470-04004-1.

\bibitem[Searle et~al.(1992)Searle, Casella, and McCulloch]{Searle1992}
Shayle~Robert Searle, George Casella, and Charles~E. McCulloch.
\newblock \emph{Variance components}.
\newblock Wiley series in probability and mathematical statistics. Applied
  probability and statistics. J. Wiley, New York, Chichester, 1992.
\newblock ISBN 0-471-62162-5.

\bibitem[Volfovsky and Hoff(2014)]{Volfovsky2012}
Alexander Volfovsky and Peter~D. Hoff.
\newblock Hierarchical array priors for anova decompositions.
\newblock \emph{Annals of Applied Statistics}, 8\penalty0 (1):\penalty0 19--47,
  2014.

\end{thebibliography}

\appendix

\section{Proof of Theorems}
\subsection{Theorem \ref{thm:empBayes1}}
\label{app:empBayes1}

First look at
\begin{equation}
\boldsymbol{\beta}^{(k)} | \boldsymbol{\beta}^{(k-1)} \sim \mathrm{N}(\mathbf{U}_{k}^T\boldsymbol{\mu}^{(k-1)},\sigma_{k-1}^2 \mathbf{I}_{d_k}),~\boldsymbol{\mu}^{(k-1)} = \mathbf{U}_{k-1} \boldsymbol{\beta}^{(k-1)}
\end{equation}
The log of the density function is
\begin{equation}
\begin{aligned}
\log f(\boldsymbol{\beta}^{(k)} | \boldsymbol{\beta}^{(k-1)}) &=
- \frac{1}{2} \frac{\| \boldsymbol{\beta}^{(k)} - \mathbf{U}_k^T
\boldsymbol{\mu}^{(k-1)} \|^2}{\sigma_{k-1}^2} + \mathrm{const} \\
&= - \frac{1}{2} \frac{\| \mathbf{U}_k (\boldsymbol{\beta}^{(k)} - \mathbf{U}_k^T
\boldsymbol{\mu}^{(k-1)}) \|^2 }{\sigma_{k-1}^2}  + \mathrm{const} \\
&= -\frac{1}{2} \frac{\| \boldsymbol{\mu}^{(k)} -
\boldsymbol{\mu}^{(k-1)} \|^2}{\sigma_{k-1}^2} + \mathrm{const}
\end{aligned}
\end{equation}

The distribution of $\mathbf{y}$ given $\boldsymbol{\beta}_{k}$ is
\begin{equation}
\mathbf{y} | \boldsymbol{\beta}_{k} \sim \mathrm{N}(\mathbf{U}_{k} \boldsymbol{\beta}_{k}, \sigma^2 \mathbf{I}_n + \sigma_m^2 \mathbf{I}_n + \mathbf{U}_m \mathbf{D} \mathbf{U}_m^T).
\end{equation}
Here $\mathbf{D}$ is a diagonal matrix
\begin{equation}
\mathbf{D} = \left(
\begin{array}{ccc}
(\sigma_{m-1}^2 + \ldots + \sigma_k^2) \mathbf{I}_{d_k} & & \\
& \ddots & \\
& & \sigma_{m-1}^2 \mathbf{I}_{d_m - d_{m-1}} \\
\end{array}
\right).
\end{equation}
Thus
\begin{equation}
\begin{aligned}
\log f(\mathbf{y} | \boldsymbol{\beta}_{k}) &= -\frac{1}{2}
(\mathbf{y} - \mathbf{U}_k \boldsymbol{\beta}^{(k)})^T
((\sigma^2+\sigma_m^2) \mathbf{I}_n + \mathbf{U}_m \mathbf{D}
\mathbf{U}_m^T)^{-1} (\mathbf{y} - \mathbf{U}_k
\boldsymbol{\beta}^{(k)}) \\
& \qquad \qquad \qquad \qquad \qquad \qquad \qquad \qquad
\qquad
+ \mathrm{const}\\
&= -\frac{1}{2} (\mathbf{y} - \mathbf{U}_k \boldsymbol{\beta}^{(k)})^T \left[\mathbf{U} \left(
\begin{array}{cc}
(\sigma^2 + \sigma_m^2 + \ldots \sigma_k^2) \mathbf{I}_{d_k} & \\
& \tilde{\mathbf{D}} \\
\end{array}
\right)
\mathbf{U}^T\right]^{-1} \\
& \qquad \qquad \qquad \qquad \qquad \qquad \qquad \qquad
\qquad
\cdot (\mathbf{y} - \mathbf{U}_k
\boldsymbol{\beta}^{(k)}) + \mathrm{const} \\
&= -\frac{1}{2} (\mathbf{y} - \mathbf{U}_k \boldsymbol{\beta}^{(k)})^T \mathbf{U} \left(
\begin{array}{cc}
(\sigma^2 + \sigma_m^2 + \ldots \sigma_k^2) \mathbf{I}_{d_k} & \\
& \tilde{\mathbf{D}} \\
\end{array}
\right)^{-1} \\
& \qquad \qquad \qquad \qquad \qquad \qquad \qquad \qquad
\cdot \mathbf{U}^T(\mathbf{y} - \mathbf{U}_k \boldsymbol{\beta}^{(k)}) + \mathrm{const} \\
\end{aligned}
\end{equation}
Notice that $\mathbf{U}^T (\mathbf{y} - \mathbf{U}_k \boldsymbol{\beta}^{(k)}) = \left(
\begin{array}{c}
\mathbf{U}_k^T \mathbf{y} - \boldsymbol{\beta}^{(k)} \\
\mathbf{U}_k^{\perp^T} \mathbf{y} \\
 \end{array}
\right)$, so
\begin{equation}
\log f(\mathbf{y} | \boldsymbol{\beta}^{(k)}) = -\frac{\|\boldsymbol{\beta}^{(k)} - \mathbf{U}_k^T \mathbf{y}\|^2}{2(\sigma^2 + \sigma_K^2 + \ldots + \sigma_k^2)} - g(\mathbf{y}).
\end{equation}
The claim is immediately followed by the fact that $\|\boldsymbol{\beta}^{(k)} - \mathbf{U}_k^T \mathbf{y}\|^2 = \|\mathbf{U}_k(\boldsymbol{\beta}^{(k)} - \mathbf{U}_k^T \mathbf{y})\|^2 = \|\boldsymbol{\mu}^{(k)} - \mathbf{P}_k \mathbf{y}\|^2 = \| \boldsymbol{\mu}^{(k)} - \mathbf{y} \|^2 - \|\mathbf{y} - \mathbf{P}_k\mathbf{y}\|^2$. The regularization parameter is
\begin{equation}\lambda_k = \frac{\sigma^2 + \sigma_K^2 + \ldots + \sigma_k^2}{\sigma_{k-1}^2}\end{equation}.

\subsection{Theorem \ref{thm:empBayes2}}
\label{app:empBayes2}
Recall our algorithm is just iteratively maximizing likelihood \eqref{eq:loglike}
to get $\hat{\boldsymbol{\beta}}_m$ and then find the posterior mean
of $\boldsymbol{\beta}$ given $\mathbf{y}$
and $\boldsymbol{\beta}^{(k)} = \hat{\boldsymbol{\beta}}_m$.

By empirical Bayes estimator of $\boldsymbol{\beta}$, I mean just
compute the posterior mean of $\boldsymbol{\beta}$ given $\mathbf{y}$
in \eqref{eq:randeff} and plug in whatever $\sigma^2$ and $\sigma_k^2$ we used to compute $\lambda_k$.

The posterior mean of \eqref{eq:randeff} can be computed by Tweedie's
formula\cite{Robbins1964}. The marginal distribution of $\mathbf{y}$ is
$\mathrm{N}(\mathbf{0}, (\sigma^2+\sigma_m^2) \mathbf{I}_n +
\mathbf{U}_m \boldsymbol{\Sigma} \mathbf{U}_m^T)$, i.e.
\begin{equation}
m(\mathbf{y}) \propto \exp \{ - \frac{1}{2} \mathbf{y}^T ((\sigma^2+\sigma_m^2) \mathbf{I}_n +
\mathbf{U}_m \boldsymbol{\Sigma} \mathbf{U}_m^T)^{-1} \mathbf{y} \}
\end{equation}
so the posterior mean is given by
\begin{equation}
\begin{aligned} \label{eq:bayessol}
\mathrm{E}[\boldsymbol{\mu} | \mathbf{y}]
&= \mathbf{y} + \sigma^2 \nabla \log m(\mathbf{y}) \\
&= \mathbf{y} - \sigma^2 ((\sigma^2+\sigma_m^2) \mathbf{I}_n +
\mathbf{U}_m \boldsymbol{\Sigma} \mathbf{U}_m^T)^{-1} \mathbf{y} \\
&=  \frac{\sigma_m^2}{\sigma^2 + \sigma_m^2} \mathbf{y} \\
& \quad +
\mathbf{U} \left(\frac{\sigma^2}{\sigma^2 + \sigma_m^2}\mathbf{I}_n - \sigma^2
\left(
\begin{array}{cc}
(\sigma^2 + \sigma_m^2) \mathbf{I}_{d_m} + \boldsymbol{\Sigma} & \\
& (\sigma^2 + \sigma_m^2) \mathbf{I}_{n-d_m} \\
\end{array}
\right)^{-1} \right) \mathbf{U}^T  \mathbf{y} \\
&= \frac{\sigma_m^2}{\sigma^2 + \sigma_m^2} \mathbf{y} +
\sum_{k=1}^m \left(\frac{\sigma^2}{\sigma^2 + \sigma_m^2} - \frac{\sigma^2}{\sigma^2 + \sigma_m^2 +
    \cdots + \sigma_{k-1}^2} \right) \mathbf{V}_k \mathbf{V}_k^T
\mathbf{y}
\end{aligned}
\end{equation}

On the other hand, in the case of a balanced table, each step of our
algorithm produces
\begin{equation}
\begin{aligned}
  \hat{\boldsymbol{\mu}}^{(k)} &= \mathbf{U}_k \mathbf{U}_k^T
(\frac{1}{1+\lambda_k} \mathbf{y} + \frac{\lambda_k}{1 + \lambda_k}
\hat{\boldsymbol{\mu}}^{(k-1)}) \\
&= \mathbf{U}_k \mathbf{U}_k^T \left( \frac{\sigma^2_{k-1}}{\sigma^2 +
  \sigma_m^2 + \cdots + \sigma_{k-1}^2} \mathbf{y} + \frac{\sigma^2 +
  \sigma_m^2 + \cdots + \sigma_{k}^2}{\sigma^2 +
  \sigma_m^2 + \cdots + \sigma_{k-1}^2} \hat{\boldsymbol{\mu}}^{(k-1)}
\right), \\
& \qquad\qquad\qquad\qquad\qquad\qquad\qquad\qquad\qquad\qquad  \quad k=1,\ldots,m
\end{aligned}
\end{equation}
and finally
\begin{equation} \label{eq:hiersol}
\hat{\boldsymbol{\mu}} = \mathrm{E} [\boldsymbol{\mu} |
\mathbf{y}, \hat{\mathbf{y}}^{(k)}] = \frac{\sigma_m^2}{\sigma^2 +
  \sigma_m^2} \mathbf{y} + \frac{\sigma^2}{\sigma^2 + \sigma_m^2} \hat{\mathbf{y}}^{(m)}
\end{equation}

It suffices to show \eqref{eq:bayessol} and \eqref{eq:hiersol} are
actually the same. Notice that $\mathbf{U}_k \mathbf{U}_k^T
\hat{\boldsymbol{\mu}}^{(j)} = \hat{\boldsymbol{\mu}}^{(j)}$ for $j < k$ because
$\hat{\boldsymbol{\mu}}^{(j)} \in \mathcal{S}_j \subset \mathcal{S}_k$ and
$\mathbf{U}_k \mathbf{U}_k^T$ is just the projection matrix onto
$\mathcal{S}_k$. With some calculation,
\begin{equation}
\begin{aligned}
\hat{\boldsymbol{\mu}}^{(k)} &= \sum_{k=1}^m \frac{\sigma^2 +
  \sigma_m^2}{\sigma^2 + \sigma_m^2 + \cdots + \sigma_k^2}
\frac{\sigma_{k-1}^2}{\sigma^2 + \sigma_m^2 + \cdots + \sigma_{k-1}^2}
\mathbf{U}_{k} \mathbf{U}_{k}^T \mathbf{y}\\
&= (\sigma^2 +   \sigma_m^2)\sum_{k=1}^m \left(\frac{1}{\sigma^2 + \sigma_m^2 + \cdots + \sigma_k^2} - \frac{1}{\sigma^2 + \sigma_m^2 + \cdots + \sigma_{k-1}^2}\right)
\mathbf{U}_k \mathbf{U}_k^T \mathbf{y}
\end{aligned}
\end{equation}
Since $\mathbf{U}_k \mathbf{U}_k^T \mathbf{y} = \sum_{j=1}^k
\mathbf{V}_j \mathbf{V}_j^T \mathbf{y}$, it is easy to see that
\begin{equation}
\frac{\sigma^2}{\sigma^2 + \sigma_m^2} \hat{\boldsymbol{\mu}}^{(k)}
= \sigma^2 \sum_{k=1}^m \left(\frac{1}{\sigma^2 + \sigma_m^2} - \frac{1}{\sigma^2 + \sigma_m^2 + \cdots + \sigma_{k-1}^2}\right)
\mathbf{V}_k \mathbf{V}_k^T \mathbf{y}
\end{equation}
The theorem is immediately proved if we plug this in \eqref{eq:hiersol}
and compare it to \eqref{eq:bayessol}.

\end{document}